\newcommand{\R}{\mathbb{R}}
\newtheorem{theorem}{Theorem}
\newtheorem*{theorem*}{Theorem}
\newtheorem*{claim*}{Claim}
\newtheorem{lemma}[theorem]{Lemma}
\newtheorem{claim}[theorem]{Claim}
\newtheorem*{defn}{Definition}
\newtheorem*{remark}{Remark}
\newcommand{\Z}{\mathbb{Z}}
\newcommand{\C}{\mathbb{C}}
\newcommand{\disc}{\mathsf{disc}}
\numberwithin{equation}{section}
\begin{document}

\title{The discrepancy of greater-than}



\author{Srikanth Srinivasan}
\address{Department of Computer Science, University of Copenhagen
and Department of Computer Science, Aarhus University}
\email{srsr@di.ku.dk}

\author{Amir Yehudayoff}
\address{Department of Computer Science, University of Copenhagen
and Department of Mathematics, Technion-IIT}
\email{amir.yehudayoff@gmail.com}

\begin{abstract}
The discrepancy of the $n \times n$
greater-than matrix 
is shown to be $\frac{\pi}{2 \ln n}$ up to lower order terms. 
\end{abstract}

\maketitle

The greater-than matrices 
appear in many areas of mathematics.
They form a central example in communication complexity (see e.g.~\cite{rao2020communication}).
They are studied in analysis as the main triangle projection (e.g.~\cite{kwapien1970main}).
They serve as a model for threshold gates in circuit complexity (e.g.~\cite{hajnal1993threshold}).
Understanding their properties is therefore a fundamental problem.

\begin{defn}
Let $G = G_n$ be the $n \times n$ signed greater-than matrix:
for every $j,k \in [n]$,
$$G_{j,k} = 2_{j+k \leq n}-1.$$
\end{defn}

One way to capture the structure of an object is using discrepancy~\cite{matousek1999geometric,chazelle1998discrepancy}.
On a high-level, discrepancy measures the maximum correlation 
with certain test functions, and
it captures pseudo-randomness properties. 
Discrepancy of matrices plays a key role in communication complexity.
It allows to lower bound randomized communication complexity
(see~\cite{rao2020communication} and references within), and
it allows to bound information complexity~\cite{braverman2016discrepancy}.
It also satisfies a direct product property~\cite{shaltiel2001towards,lee2008direct}.

\begin{defn}
The discrepancy of an $n \times n$ real-valued matrix $M$ with respect to a distribution $\mu = \mu_{j,k}$
on $[n] \times [n]$ is
$$\disc_\mu(M) = \max_{x,y \in \C^n: \|x\|_\infty = \|y\|_\infty = 1} 
\Big| \sum_{j,k} M_{j,k} \mu_{j,k} x_j y_k \Big| .$$
The discrepancy of $M$ is
$$\disc(M) = \inf_{\mu} \disc_{\mu}(M).$$
\end{defn}

\begin{remark}
In communication complexity, the standard definition of discrepancy uses
boolean vectors $x,y$ instead of complex vectors.
Our proof leads to a sharp bound for the complex version
(which is equal to the boolean version up to constant factors). 
\end{remark}

The main result of this note is a sharp analysis of the discrepancy
of the greater-than matrix.

\begin{theorem*}
$$\disc(G_n)  = (1-o(1)) \frac{\pi}{2 \ln n}.$$
\end{theorem*}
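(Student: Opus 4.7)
The plan is to prove matching upper and lower bounds $\disc(G_n) = (1 \pm o(1))\,\pi/(2\ln n)$. Since discrepancy is invariant under row/column permutations, reversing the order of columns turns $G_n$ into the matrix $H$ with $H_{j,k} = +1$ for $k > j$ and $H_{j,k} = -1$ for $k \le j$, so I will work throughout with $H$.

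For the upper bound, I will take the probability distribution $\mu_{j,k} = 1/(Z\,|k-j|)$ supported on off-diagonal pairs, where $Z = 2\sum_{d=1}^{n-1}(n-d)/d = (2+o(1))\,n\ln n$. Then $H\circ\mu = K/Z$, where $K_{j,k} = 1/(k-j)$ for $k\ne j$ is the skew discrete Hilbert matrix. Hilbert's inequality gives $\|K\|_{2\to 2}\le\pi$, so by Cauchy--Schwarz, for any $x,y\in\C^n$ with $\|x\|_\infty = \|y\|_\infty = 1$,
\[
|x^*(H\circ\mu)y| \;\le\; \frac{\|x\|_2\,\|K\|_{2\to 2}\,\|y\|_2}{Z} \;\le\; \frac{\pi n}{Z} \;=\; (1+o(1))\,\frac{\pi}{2\ln n}.
\]

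For the lower bound I will use a dual characterization. The bilinear pairing $(\mu, Z) \mapsto \sum_{j,k}\mu_{j,k}\mathrm{Re}(H_{j,k}Z_{j,k})$ on the probability simplex in $\mu$ and on the convex set $\mathcal B = \mathrm{conv}\{xy^T : \|x\|_\infty, \|y\|_\infty \le 1\}$ in $Z$, combined with the identity $|w| = \max_{|c|=1}\mathrm{Re}(cw)$ and Sion's minimax theorem, yields
\[
\disc(H) \;=\; \max_{Z\in\mathcal B}\,\min_{j,k}\mathrm{Re}(H_{j,k}Z_{j,k}).
\]
Taking $Z = H/\nu(H) \in \mathcal B$, where $\nu$ is the gauge of $\mathcal B$, each entry $\mathrm{Re}(H_{j,k}Z_{j,k})$ equals $H_{j,k}^2/\nu(H) = 1/\nu(H)$, so $\disc(H) \ge 1/\nu(H)$. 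It therefore suffices to show $\nu(H) \le (1+o(1))\,2\ln n/\pi$ by an explicit rank-$1$ decomposition.

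To this end, define $f\colon\Z/(2n)\to\{-1,0,+1\}$ by $f \equiv +1$ on $\{1,\dots,n-1\}$, $f \equiv -1$ on $\{n+1,\dots,2n-1\}$, and $f(0) = f(n) = 0$. A direct DFT computation gives $\hat f(\ell) = -i\cot(\pi\ell/(2n))/n$ for odd $\ell$ and $0$ for even $\ell$. Since $f(k-j \bmod 2n) = H_{j,k} + \mathbf 1_{j=k}$ for $j,k\in[n]$, Fourier inversion yields the rank-$1$ expansion $H + I = \sum_{\ell\text{ odd}}\hat f(\ell)\,\overline{u_\ell}\,u_\ell^T$ with $u_\ell(m) = e^{i\pi\ell m/n}$ of unit modulus. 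Hence $\nu(H) \le \nu(I) + \sum_{\ell\text{ odd}}|\hat f(\ell)|$, where $\nu(I)\le 1$ (by the $\Z/n$ Fourier expansion of the identity) is absorbed into the $o(1)$. The main calculation is the asymptotic $\sum_{\ell\text{ odd}}|\hat f(\ell)| = (1+o(1))\,2\ln n/\pi$: the symmetry $\cot(\pi-x) = -\cot x$ reduces it to $(2/n)\sum_{\ell\text{ odd},\,1\le \ell\le n-1}\cot(\pi\ell/(2n))$, and inserting $\cot(\pi\ell/(2n)) = 2n/(\pi\ell) + O(\ell/n)$ converts this to $(4/\pi)\sum_{\ell\text{ odd},\,\ell\le n}1/\ell + O(1) = (1+o(1))\,2\ln n/\pi$. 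The delicate point is verifying that the $O(\ell/n)$ error in the cotangent expansion aggregates to only $O(n)$, negligible against the $\Theta(n\ln n)$ leading term, so the sharp constant $2/\pi$ is captured.
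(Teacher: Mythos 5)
Your proposal is correct, and it splits into two parts of different character relative to the paper. The upper bound is essentially the paper's argument: you replace the Hankel kernel $1/(n+\tfrac12-j-k)$ by the Toeplitz kernel $1/(k-j)$, but the mechanism is identical --- normalize a Hilbert-type matrix into a distribution, then combine Hilbert's inequality ($\|K\|_{2\to 2}\le\pi$) with Cauchy--Schwarz and the count $Z=(2+o(1))n\ln n$. The lower bound, however, takes a genuinely different route. The paper invokes Bennett's theorem on Schur multipliers, whose proof of existence of the certifying complex measure goes through Hahn--Banach, the Riesz representation theorem and the F.\ and M.\ Riesz theorem; the paper even remarks on the surprising reliance on this abstract machinery. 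You instead use the standard (weak) duality $\disc(H)\ge\min_{j,k}\mathrm{Re}(H_{j,k}Z_{j,k})$ for any $Z$ in the convex hull of rank-one sign-bounded matrices, and then exhibit the certificate explicitly: the DFT of the odd step function on $\Z/(2n)$ gives $H+I=\sum_{\ell\ \mathrm{odd}}\hat f(\ell)\,\overline{u_\ell}u_\ell^T$ with $\hat f(\ell)=-i\cot(\pi\ell/(2n))/n$, and the cotangent sum evaluates to $(1+o(1))\,2\ln n/\pi$, matching Bennett's bound $\|\nu\|\le\ln n/\pi+2$ after accounting for the paper's extra factor $\Pr_\mu[j\ge k]\ge\tfrac12$. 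This is a finite, self-contained discretization of the same underlying idea (expanding the triangular truncation in characters with small total coefficient mass), and it buys you an elementary proof with an explicit witness distribution over the rank-one tests $(\overline{u_\ell},u_\ell)$, at the cost of the short but careful bookkeeping you correctly flag: the $O(\ell/n)$ errors in $\cot(\pi\ell/(2n))=2n/(\pi\ell)+O(\ell/n)$ sum to $O(n)$, hence contribute only $O(1)$ after the $2/n$ normalization, and the identity correction $\nu(I)\le 1$ is likewise negligible, so the sharp constant survives.
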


This sharp bound improves the previous upper bound $\disc(G_n) \leq O(\tfrac{1}{\sqrt{\ln n}})$ proved by Braverman and Weinstein~\cite{braverman2016discrepancy}.
It also improves all previous lower bounds
on the two-party public-coin communication complexity 
of the greater-than function~\cite{viola2015communication,braverman2016discrepancy,ramamoorthy2015communication}.

To prove an upper bound on $\disc(G)$,
we need to choose an appropriate distribution $\mu$.
The ``natural'' distributions---that were used by Viola~\cite{viola2015communication}, by Braverman and Weinstein~\cite{braverman2016discrepancy},
and by Ramamoorthy and Sinha~\cite{ramamoorthy2015communication}---lead 
to sub-optimal discrepancy $\Omega(\tfrac{1}{\sqrt{\ln n}})$;
for more details see Section~\ref{sec:previous}.
The distribution we use is constructed via a Hilbert matrix,
and it is based on ideas of Kwapien and Pelczyski~\cite{kwapien1970main}
and of Titchmarsh~\cite{titchmarsh1926reciprocal}
from analysis.

To prove a lower bound on $\disc(G)$,
we need to identify the witness vectors $x,y$.
There is a natural and simple mechanism for locating $x,y$ that yields
an $\Omega(\tfrac{1}{\ln n})$ lower bound on the discrepancy (see e.g.~\cite{avraham2022blocky}).
Getting an exact bound, however, is not so simple.
We use deep ideas of Bennett~\cite{Bennett_1977} from the study of Schur multipliers.
Somewhat surprisingly, the mechanism that enables to locate the witnesses $x,y$ uses abstract machinery,
like the Hahn-Banach theorem, the Riesz representation theorem
and the F.\ and M.\ Riesz theorem.

\section{The upper bound}

Fix $n$ for the rest of this text, and
let $H = H_n$ be the following $n \times n$ version of the Hilbert matrix:
\begin{align}
\label{eqn:H}
H_{j,k} = 
\frac{1}{n+\tfrac{1}{2}-j-k} .
\end{align}
This Hilbert matrix has three useful properties
that are described in the following three claims.

\begin{claim}
$H_{j,k} G_{j,k} \geq 0$ for all $j,k$.
\end{claim}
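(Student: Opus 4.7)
The plan is to do a direct sign comparison, splitting into the two regions determined by the diagonal $j+k = n$. The key observation is that the denominator $n + \tfrac{1}{2} - j - k$ of $H_{j,k}$ is a half-integer (since $j,k$ are integers), so it is never zero and its sign is determined entirely by whether $j+k \leq n$ or $j+k \geq n+1$.

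Concretely, I would first unpack the definitions: the indicator $G_{j,k} = 2_{j+k \leq n} - 1$ equals $+1$ when $j+k \leq n$ and $-1$ when $j+k \geq n+1$. In the first case, $n + \tfrac12 - j - k \geq \tfrac12 > 0$, so $H_{j,k} > 0$ and hence $H_{j,k} G_{j,k} > 0$. In the second case, $n + \tfrac12 - j - k \leq -\tfrac12 < 0$, so $H_{j,k} < 0$ and again $H_{j,k} G_{j,k} > 0$. In both cases the product is strictly positive, which gives the claim (in fact with a strict inequality).

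There is no real obstacle here; the claim is essentially a bookkeeping check that the shift by $\tfrac12$ in the denominator of $H$ is exactly what is needed so that $H$ inherits the sign pattern of $G$ and avoids the singularity on the anti-diagonal $j+k = n+\tfrac12$ that would otherwise occur for the ordinary Hilbert matrix.
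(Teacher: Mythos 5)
Your proof is correct, and it is exactly the sign check the paper treats as immediate (no proof is given in the paper for this claim): since $j+k$ is an integer, $n+\tfrac12-j-k$ is a nonzero half-integer whose sign agrees with that of $G_{j,k}$, so the product is in fact strictly positive. Nothing further is needed.
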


\begin{claim}[follows~\cite{titchmarsh1926reciprocal}]
\label{clm:titch}
$| x H y | \leq \pi \|x\|_2 \|y\|_2$
for all $x,y \in \C^n$.
\end{claim}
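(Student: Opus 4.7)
My plan is to reduce the bound to a classical Fourier estimate on the circle and then apply Cauchy--Schwarz together with Parseval. Writing $H_{j,k} = 1/(n+\tfrac12-j-k)$, the substitution $j \mapsto n+1-j$ preserves $\|x\|_2$ and transforms the denominator into $j-k-\tfrac12$. So it suffices to show
\[
\Big| \sum_{j,k \in [n]} \frac{x_j\, y_k}{j-k-\tfrac12} \Big| \;\leq\; \pi \, \|x\|_2 \, \|y\|_2.
\]

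The key device is the integral representation
\[
\frac{1}{m-\tfrac12} \;=\; \frac{i}{2} \int_0^{2\pi} e^{i\theta/2}\, e^{-im\theta}\, d\theta, \qquad m \in \Z,
\]
which I would verify by direct calculation: setting $\alpha = \tfrac12-m$, the integral evaluates to $(e^{2\pi i \alpha} - 1)/(i\alpha)$, and $e^{2\pi i \alpha} = e^{i\pi(1-2m)} = -1$, so the right-hand side equals $1/(m-\tfrac12)$. Substituting this identity into the sum (with $m = j-k$) and interchanging sum and integral yields
\[
\sum_{j,k} \frac{x_j\, y_k}{j-k-\tfrac12} \;=\; \frac{i}{2} \int_0^{2\pi} e^{i\theta/2}\, X(\theta)\, Y(\theta)\, d\theta,
\]
where $X(\theta) = \sum_j x_j e^{-ij\theta}$ and $Y(\theta) = \sum_k y_k e^{ik\theta}$.

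Finally, I would bound the modulus by $\tfrac12 \int_0^{2\pi}|X(\theta)||Y(\theta)|\,d\theta$, apply Cauchy--Schwarz in $L^2[0,2\pi]$, and invoke Parseval, which gives $\|X\|_{L^2}^2 = 2\pi\|x\|_2^2$ and similarly for $Y$. The constants then combine as $\tfrac12 \cdot \sqrt{2\pi} \cdot \sqrt{2\pi} = \pi$, matching the claim.

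The only creative step is spotting the Fourier representation of $1/(m-\tfrac12)$; once in hand, the rest is mechanical. The fact that the optimal constant $\pi$ emerges on the nose is strong evidence that this Fourier route, essentially Titchmarsh's, is the intended one.
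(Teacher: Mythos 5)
Your proof is correct, and it reaches the sharp constant $\pi$ by a genuinely different route from the paper. You use the classical Fourier proof of Hilbert's inequality: after the index reversal $j \mapsto n+1-j$ turns the Hankel kernel into the Toeplitz kernel $1/(j-k-\tfrac12)$, you write each entry as the Fourier coefficient $\frac{i}{2}\int_0^{2\pi} e^{i\theta/2} e^{-i(j-k)\theta}\,d\theta$ (your verification of this identity is right), factor the double sum into a product of two trigonometric polynomials, and finish with Cauchy--Schwarz and Parseval; the constants $\tfrac12\cdot\sqrt{2\pi}\cdot\sqrt{2\pi}=\pi$ check out. The paper instead bounds the operator norm directly: it embeds $H$ into a large $[-N,N]\times[-N,N]$ matrix $\tilde H$, expands $\|\tilde H\tilde x\|_2^2$ into diagonal and off-diagonal parts, evaluates the diagonal part via $\sum_{\ell\ge 0}(\ell+\tfrac12)^{-2}=\tfrac{\pi^2}{2}$, and shows the off-diagonal part telescopes to zero as $N\to\infty$ --- in effect, that distinct rows of the bi-infinite Hilbert matrix are asymptotically orthogonal. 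Your argument avoids the limiting extension and the cancellation bookkeeping, and it harmonizes with the Fourier-analytic machinery the paper already deploys for the lower bound; the paper's argument is more elementary in that it needs only the value of $\sum 1/(2\ell+1)^2$ and no interchange of sum and integral. Both yield the optimal constant.
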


\begin{claim}
\label{clm:Hlarge}
$\|H\|_1 \geq 2 n (\ln(n)-3) $.
\end{claim}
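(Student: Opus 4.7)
The approach is a direct calculation: evaluate $\|H\|_1 = \sum_{j,k} |H_{j,k}|$ by grouping terms according to the diagonal $s = j + k$. For each $s \in \{2, \ldots, 2n\}$, the number of pairs $(j,k) \in [n]^2$ with $j + k = s$ equals $\min(s-1,\,2n + 1 - s)$, and every such pair contributes $1/|n + \tfrac{1}{2} - s|$. Splitting at $s = n+1$ and reindexing (by $i = n + 1 - s$ in the first half and $i = s - n$ in the second), both denominators become $i - \tfrac{1}{2}$. Using the algebraic identity $(2n + 1 - 2i)/(i - \tfrac{1}{2}) = 2n/(i - \tfrac{1}{2}) - 2$ to merge the two halves collapses the total to
$$\|H\|_1 \;=\; 2n \sum_{i=1}^{n-1} \frac{1}{i - \tfrac{1}{2}} \;-\; 2(n - 1) \;+\; \frac{1}{n - \tfrac{1}{2}}.$$

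Next I would lower bound the harmonic-type sum. Writing $\sum_{i=1}^{n-1}\tfrac{1}{i - 1/2} = 2 \sum_{k=1}^{n-1} \tfrac{1}{2k - 1}$ and comparing with the integral of the decreasing function $1/(2x - 1)$ over each unit interval $[k, k+1]$ gives
$$\sum_{i=1}^{n-1} \frac{1}{i - \tfrac{1}{2}} \;\geq\; 2\int_{1}^{n} \frac{dx}{2x - 1} \;=\; \ln(2n - 1).$$
Plugging this back in yields $\|H\|_1 \geq 2n \ln(2n - 1) - 2n$, and the desired inequality $2n \ln(2n - 1) - 2n \geq 2n(\ln n - 3)$ reduces to $\ln(2 - 1/n) \geq -2$, which holds for every $n \geq 1$.

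There is no real obstacle here; the entire argument is bookkeeping, and the slack built into the bound $\ln n - 3$ means one need not track lower-order terms carefully. The one point that needs attention is the counting of the multiplicity $\min(s-1, 2n+1-s)$ and the verification that the two ranges $s \leq n$ and $s \geq n+1$ recombine cleanly after reindexing; once this is done, a single integral comparison finishes the proof.
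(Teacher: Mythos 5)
Your proof is correct and follows essentially the same route as the paper: group the entries of $H$ by the anti-diagonal $j+k$, count multiplicities, reduce to a harmonic-type sum $\sum_i \frac{1}{i-1/2}$, and lower-bound it by a logarithm. Your version is in fact a bit more careful than the paper's (which is terse about the $\ell=0$ diagonal and the final harmonic estimate), and the closing comparison $\ln(2-1/n)\geq -2$ checks out.
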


\begin{proof}[Proof of upper bound using the three claims]
Define a distribution $\mu^*$ on the entries by
$$\mu^*_{j,k} = \frac{|H_{j,k}|}{\|H\|_1}.$$  
Let $x,y \in \C^n$ be the vectors of $\ell_\infty$-norm one that
witness the discrepancy of $G$ with respect to $\mu^*$.
The $\ell_2$-norm of $x,y$ is at most $\sqrt{n}$.
Think of $x$ as a row vector and of $y$ as a column vector.
Bound
\begin{equation*}
\disc_{\mu^*}(G) = \frac{1}{\|H\|_1} |x H y| 
 \leq \frac{n}{\|H\|_1} \Big|\tfrac{x}{\|x\|_2} H \tfrac{y}{\|y\|_2}\Big|  \leq \frac{\pi}{2 (\ln(n)-3) } .
\qedhere 
\end{equation*}
\end{proof}

\begin{proof}[Proof of Claim~\ref{clm:titch}]
Because $H$ is symmetric, we need to upper bound
its spectral norm over $\R$: for every $x \in \R^n$,
$$\|Hx\|_2 \leq \pi \|x\|_2.$$
To prove this, extend $H$ to a larger matrix.
Let $N$ be a large integer and let $J = [-N,N] \cap \Z$.
Let $\tilde H$ be the $J \times J$ 
matrix defined by the formula in ~\eqref{eqn:H}.
Let $\tilde x \in \R^{J}$ be so that
$\tilde x_j = x_j$ for every $j \in [n]$,
and $\tilde x_j = 0$ for every $j \in J \setminus [n]$.
Bound
\begin{align*} 
\|H x\|_2^2 
& \leq \|\tilde H \tilde x \|_2^2 \\
& = \sum_{j \in J} \Big| \sum_{k \in J} \frac{1}{n+\tfrac{1}{2}-j-k} \tilde x_k \Big|^2 \\
& = \sum_j \Big( \sum_{k} \frac{1}{(n+\tfrac{1}{2}-j-k)^2} \tilde x_k^2 \Big)
\\ 
& \qquad + \Big( \sum_{k_1 \neq k_2} \frac{1}{n+\tfrac{1}{2}-j-k_1}
\frac{1}{n+\tfrac{1}{2}-j-k_2} \tilde x_{k_1} {\tilde x_{k_2}} \Big) \\
& = \Big( \sum_k \tilde x_k^2 \sum_{i } \frac{1}{(n+\tfrac{1}{2}-j-k)^2}  \Big)
\\ 
& \qquad  + \Big( \sum_{k_1 \neq k_2} \tilde x_{k_1} {\tilde x_{k_2}}
\sum_{j} \frac{1}{n+\tfrac{1}{2}-j-k_1} \cdot
\frac{1}{n+\tfrac{1}{2}-j-k_2} \Big) .
\end{align*}
Bound each of the two terms separately. 
For the first term, for each $k$,
\begin{align*}
\sum_{j \in J} \frac{1}{(n+\tfrac{1}{2}-j-k)^2}
\leq 2 \sum_{\ell=0}^\infty \frac{1}{(\ell+\tfrac{1}{2})^2} 
=  8 \sum_{\ell=0}^\infty \frac{1}{(2\ell+1)^2}  = \pi^2 ;
\end{align*}
the last equality can be justified as follows.
Because $\sum_{\ell=1}^\infty \frac{1}{\ell^2} = \tfrac{\pi^2}{6}$,
we know that $\sum_{\ell=1}^\infty \frac{1}{(2\ell)^2} = \tfrac{\pi^2}{24}$,
so $\sum_{\ell=0}^\infty \frac{1}{(2\ell+1)^2} = \pi^2 (\tfrac{1}{6}-\tfrac{1}{24})$.
For the second term, for every $k_1 \neq k_2$,
\begin{align*}
& \Big| \sum_{j \in J} \frac{1}{n+\tfrac{1}{2}-j-k_1} \cdot
\frac{1}{n+\tfrac{1}{2}-j-k_2}\Big| \\
& =  \frac{1}{|k_2-k_1|} \cdot \Big|\Big(\sum_{i \in I}\frac{1}{n+\tfrac{1}{2}-j-k_1}\Big) -
\Big( \sum_{i \in I} \frac{1}{n+\tfrac{1}{2}-j-k_2} \Big) \Big| \\
& \leq \frac{1}{|k_2-k_1|} \cdot 2 |k_2-k_1| \Big| \frac{1}{N-\tfrac{1}{2}-n-|k_1|-|k_2|} \Big| ,
\end{align*}
where we used the fact that most terms cancel out.
The last quantity tends to zero as $N \to \infty$.
This completes the proof
because there are only $n(n-1)$ significant $k_1 \neq k_2$. 
\end{proof}

\begin{proof}[Proof of Claim~\ref{clm:Hlarge}]
For each integer $0 \leq \ell \leq n-1$,
there are $n-\ell$ pairs $(j,k)$ so that $j+k=n+1-\ell$,
and
there are $n-\ell$ pairs $(j,k)$ so that $j+k=n+1+\ell$.
So,
\begin{align*}
 \sum_{j,k} \Big|\frac{1}{n+\tfrac{1}{2}-j-k}\Big| 
& = \sum_{\ell =0}^{n-1} \Big| \frac{n-\ell}{n+\tfrac{1}{2}-(n+1-\ell)}\Big|
+ \Big| \frac{n-\ell}{n+\tfrac{1}{2}-(n+1+\ell)} \Big| \\
& = \sum_{\ell =0}^{n-1} \frac{n-\ell}{\ell-\tfrac{1}{2}}
+ \frac{n-\ell}{\ell+\tfrac{1}{2}} \\
& = n \Big( \sum_{\ell =0}^{n-1} \frac{1}{\ell-\tfrac{1}{2}}
+ \frac{1}{\ell+\tfrac{1}{2}}\Big) 
- \Big( \sum_{\ell =0}^{n-1} \frac{\ell}{\ell-\tfrac{1}{2}}
+ \frac{\ell}{\ell+\tfrac{1}{2}}\Big) \\
& \geq 2 n (\ln(n)-3) . \qedhere
\end{align*}

\end{proof}

\subsection{Insufficiency of previous hard distributions}
\label{sec:previous}

The previous works~\cite{viola2015communication,braverman2016discrepancy,ramamoorthy2015communication}
used different distributions that leads to a sub-optimal bound of $\disc(G) \leq O(\tfrac{1}{\sqrt{\ln n}})$;
this bound was proved in~\cite{braverman2016discrepancy}.
In this section, we focus on the distribution~$\eta$ defined in~\cite{braverman2016discrepancy},
and prove that this bound is, in fact, the best that can be obtained for this distribution.

We first recall the distribution $\eta$. 
Assume that $n=2^m$ for some positive integer $m$;
otherwise work with the largest power of two smaller than $n$. 
Identify $[n]$ with $\{0,1\}^m$ by identifying $j$ with the binary representation of $j-1$ (with the first coordinate being the most significant bit and the last coordinate being the least significant). 
Sample $(j,k)$ from $\eta$ as follows. First, sample $j$ uniformly at random from $\{0,1\}^m$,
and independently choose a uniformly random $i\in [m]$. 
Set $k$ to be equal to $j$ in all coordinates less than $i$, to be the opposite of $j$ in coordinate $i$, and chosen independently of $j$ and uniformly in all coordinates greater than $i$.

\begin{claim*}
$\disc_\eta(G) \geq \Omega\big(\tfrac{1}{\sqrt{\ln n}}\big).$
\end{claim*}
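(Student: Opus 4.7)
The plan is to construct explicit vectors $x, y \in \{-1,+1\}^n$ with $\|x\|_\infty = \|y\|_\infty = 1$ such that $|\E_\eta[G(j,k)\, x_j y_k]| \ge \Omega(1/\sqrt{\ln n})$.

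First I would decompose $\eta$ over levels: $\eta = \tfrac{1}{m}\sum_{i=1}^m \eta_i$, where $m=\log_2 n$ and $\eta_i$ is the conditional distribution given that level $i$ was chosen. Then
\[
\E_\eta[G\, x\, y] \;=\; \frac{1}{m}\sum_{i=1}^m \E_{\eta_i}[G(j,k)\, x_j y_k],
\]
and the problem reduces to finding $x, y$ with $|\sum_{i}\E_{\eta_i}[G\, x y]| \ge \Omega(\sqrt{m})$. For each level $i$, I would analyze $G$ on the support of $\eta_i$: since the first $i-1$ bits of $j$ and $k$ agree, the $i$-th bits disagree, and the later bits are independent uniform, $G(j,k)$ reduces to a low-complexity function of a few bits of $j, k$. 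Consequently one can identify a ``local'' witness pair $(x^{(i)}, y^{(i)})$ built from characters $(-1)^{j_\ell}$ and $(-1)^{k_\ell}$ achieving correlation $|\E_{\eta_i}[G\, x^{(i)}\, y^{(i)}]| \ge c$ for some absolute constant $c > 0$.

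To aggregate the local witnesses into a single pair while respecting the $\ell_\infty$ constraint, I would use a sign-of-Rademacher construction:
\[
x_j \;=\; \mathrm{sign}\!\Bigl(\sum_{i=1}^m \epsilon_i\, x^{(i)}_j\Bigr),
\qquad
y_k \;=\; \mathrm{sign}\!\Bigl(\sum_{i=1}^m \epsilon_i\, y^{(i)}_k\Bigr),
\]
for i.i.d.\ uniform signs $\epsilon_1, \dots, \epsilon_m \in \{-1,+1\}$. For each level $i$, I would compute $\E_{\eta_i}[G\, x\, y]$ by conditioning on all randomness other than the ``critical'' bit of $j$ and $k$ at position $i$; flipping that bit alters $x_j$ (respectively $y_k$) precisely when the remaining Rademacher sum lies in a window of constant width, which by the local central limit theorem happens with probability $\Theta(1/\sqrt{m})$. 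This transfers each level's $\Omega(1)$ correlation into an $\Omega(1/\sqrt{m})$ contribution post-sign, and summing over the $m$ levels yields $\E_\epsilon\, \E_\eta[G\, x\, y] \ge \Omega(1/\sqrt{m}) = \Omega(1/\sqrt{\ln n})$; a probabilistic existence argument then fixes concrete signs $\epsilon$ realizing the bound.

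The main obstacle will be the per-level computation just described: identifying local witnesses consistent with the combinatorial structure of $G$ under each $\eta_i$, and verifying that contributions from different levels aggregate with matching sign rather than canceling after the $\mathrm{sign}$ nonlinearity. The key technical tools are a Fourier / conditional-expectation argument that expands $\mathrm{sign}(\cdot)$ via its sensitivity to individual bit-flips, together with the standard Rademacher anti-concentration estimate $\Pr[|W| \le O(1)] = \Theta(1/\sqrt{m})$ for a sum $W$ of $m$ independent $\pm 1$'s. In the most favorable case, the $x^{(i)}$'s all coincide (for example, $x^{(i)}(j) = (-1)^{j_1}$ works whenever $G$ under $\eta_i$ is determined by the top bit), and the aggregation collapses to a direct $\Omega(1)$ witness, giving a bound much stronger than the one claimed.
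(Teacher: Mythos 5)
Your plan has the right overall shape, but it defers the one observation that actually makes the claim true, and the aggregation step fails as stated. The crux is that on the support of $\eta_i$ the matrix $G$ does not merely ``reduce to a low-complexity function of a few bits'': it equals $2\cdot 1_{j_i=1}-1$, a function of the single bit $j_i$ of $j$ alone (since $j$ and $k$ agree on the coordinates above $i$ and disagree at $i$, the event $j\ge k$ holds iff $j_i=1$). You label identifying the local witnesses as ``the main obstacle'' and leave it unproved, but this identification \emph{is} essentially the whole proof: once you have it, taking $y\equiv 1$ gives $\E_\eta[G_{j,k}x_j y_k]=\E_j\bigl[x_j\bigl(\tfrac{2|j|}{m}-1\bigr)\bigr]$, where $|j|$ is the Hamming weight of $j$, and any $x$ correlated with $\mathrm{sign}(2|j|-m)$ yields $\Theta\bigl(\E_j\bigl|\tfrac{2|j|}{m}-1\bigr|\bigr)=\Theta(1/\sqrt m)$ by the binomial mean-deviation estimate. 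That is exactly the paper's proof, with $x$ the indicator of $|j|\ge \tfrac m2+\sqrt m$ and $y$ the all-ones vector; no per-level decomposition, sign nonlinearity, or local central limit theorem is needed.

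The Rademacher aggregation is also broken as written. In the natural instantiation $x^{(i)}_j=(-1)^{j_i}$, $y\equiv 1$, the level-$i$ contribution after the sign is $\tfrac12\E\bigl[\mathrm{sign}(W_i-\epsilon_i)-\mathrm{sign}(W_i+\epsilon_i)\bigr]$ with $W_i=\sum_{\ell\ne i}\epsilon_\ell(-1)^{j_\ell}$; this equals $-\epsilon_i$ times a nonnegative quantity whose distribution does not depend on $\epsilon$ (each $\epsilon_\ell(-1)^{j_\ell}$ is a uniform sign regardless of $\epsilon_\ell$). Hence $\E_\epsilon\E_\eta[G\,x\,y]=0$ and your probabilistic existence argument has nothing to extract; you must fix $\epsilon\equiv 1$, at which point $x_j=\mathrm{sign}\bigl(\sum_i(-1)^{j_i}\bigr)$ is just the $\pm1$ majority vector and the argument collapses to the computation above. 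Finally, the closing remark that the favorable case would give ``a bound much stronger than the one claimed'' cannot occur: Braverman and Weinstein proved $\disc_\eta(G)\le O(1/\sqrt{\ln n})$, so $1/\sqrt{\ln n}$ is tight for this distribution, and the $x^{(i)}$'s are the distinct characters $(-1)^{j_i}$, which certainly do not coincide.
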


\begin{proof}
Define $x\in \mathbb{R}^n$ to be the indicator vector of all those $j\in \{0,1\}^m$ that have at least $\tfrac{m}{2} + \sqrt{m}$ many one entries, and $y\in \mathbb{R}^n$ to be the all-ones vector. 
Denote by $j_i$ the $i$'th coordinate of $j$.
The event $j\geq k$ holds exactly when $j_i=1$.
It follows that
\begin{align*}
\disc_\eta(G) & = \sum_{j,k} G_{j,k} \eta_{j,k} x_j \\
& = \Pr_{\eta}\left[ x_j = 1 \wedge j_i = 1\right] -  \Pr_{\mu}\left[x_j = 1 \wedge j_i = 0\right]\\
&= \Pr_{\eta}\left[x_j = 1\right]\cdot (2 \Pr_{\mu}\left[j_i = 1\ |\  x_j = 1\right] -1) .
\end{align*}
Standard estimates imply that $\Pr_{\eta}\left[x_j = 1\right] = \Omega(1).$ 
Conditioned on any such choice for $j$, the random variable $i$ is still uniform, and thus the probability that $j_i =1$ is at least $\tfrac{1}{2} + \tfrac{1}{\sqrt{m}}$. 
\end{proof}

\section{The lower bound}

%
%
%
%
%
%
%
%
%
%
%

In this section, we work with the matrix $G_{j,k} = 2_{j \geq k}-1$.
The advantage is that this $G$ is a Toeplitz matrix
$G_{j,k} = c_{j-k}$, where in the previous sections
it was a Hankel matrix $G_{j,k} = c_{j+k}$.
This distinction is meaningless in terms of discrepancy,
but turns out to be important in analysis. 

The mechanism for locating the witnesses $x,y$
relies on the existence of a certain function $\nu$
that encodes the interaction of $G$ with many witnesses
(via the Fourier transform). 
A {\em complex measure} is a bounded, Borel, absolutely continuous $\nu : [0,1) \to \C$.
Its Fourier transform is defined to be
$$\Z \ni \ell \mapsto \hat{\nu}(\ell) = \int_0^1 e^{2 \pi i \ell t} \nu(t) dt \in \C.$$
Its norm is defined to be
$$\|\nu\| = \sup_{f : [0,1) \to \C , \|f\|_\infty =1} \Big| \int_0^1 \nu(t) f(t) dt \Big|.$$
The following is implicit in Bennett's work.

\begin{lemma}[implicit in \cite{Bennett_1977}]
\label{lem:B}
There is a complex measure $\nu$ so that
for all $\ell$ so that $|\ell|\leq n$,
\begin{align}
\label{eq:muhat}
\hat{\nu}(\ell) =
\begin{cases}
1 & \ell \geq 0\\
0 & \ell < 0
\end{cases}
\end{align}
and so that
\begin{align}
\label{eqn:normnu}
\|\nu\| \leq \frac{\ln n}{\pi}+2.
\end{align}
\end{lemma}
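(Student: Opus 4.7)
My plan is to construct $\nu$ explicitly by regularizing the formal ``one-sided geometric series.'' Observe the formal identity
\[
\frac{1}{1 - e^{-2\pi i t}} = \sum_{k \geq 0} e^{-2\pi i k t}
\]
on $(0,1)$: this suggests taking $\nu_0(t) = \frac{1}{1 - e^{-2\pi i t}}$, for which $|\nu_0(t)| = \frac{1}{2|\sin(\pi t)|}$ and formally $\hat{\nu}_0(\ell) = 1_{\ell \geq 0}$ for every integer $\ell$. Unfortunately $\nu_0 \notin L^1$ because of a $\tfrac{1}{t}$-type singularity at $t=0$ (and $t=1$), so the goal is to regularize without spoiling the prescribed Fourier coefficients at $|\ell| \leq n$.

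The first step is to truncate at the scale $\delta = 1/n$ matching the highest relevant frequency: set $\tilde{\nu} = \nu_0 \cdot 1_{[\delta, 1-\delta]}$. The key estimate is
\[
\|\tilde{\nu}\|_1 = \int_\delta^{1-\delta} \frac{dt}{2|\sin(\pi t)|} = -\frac{1}{\pi}\log\tan(\pi\delta/2) = \frac{\log n}{\pi} + O(1),
\]
using the antiderivative $\int \csc(\pi t)\,dt = \frac{1}{\pi}\log|\tan(\pi t/2)|$. This matches the target constant exactly.

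The second step is to correct the Fourier errors $\epsilon_\ell := 1_{\ell \geq 0} - \hat{\tilde{\nu}}(\ell)$ at $|\ell| \leq n$ introduced by the truncation. Expanding $\nu_0(t) = \frac{1}{2\pi i t} + \frac{1}{2} + O(t)$ near the origin and the analogous expansion near $t=1$, the error $\epsilon_\ell$ is (up to a bounded $O(\delta)$ term) a sum of incomplete sine-integral-type expressions evaluated at $2\pi \ell \delta$. Hence $\epsilon_\ell = G(\ell/n) + O(1/n)$ for a fixed smooth bounded function $G$ built from the sine integral. The correction polynomial $r(t) = \sum_{|\ell|\leq n} \epsilon_\ell e^{2\pi i \ell t}$ is then a Riemann-sum sampling of $G$ at $n$ equispaced points: by Poisson summation, $r(t) \approx n\,\check{G}(nt)$, so $\|r\|_1 \leq \|\check{G}\|_{L^1(\mathbb{R})} + O(1/n) = O(1)$, uniformly in $n$. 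Taking $\nu := \tilde{\nu} + r$ yields a measure with the required Fourier coefficients and norm at most $\frac{\log n}{\pi} + O(1)$.

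The main obstacle will be Step 2: rigorously establishing the $O(1)$ bound on $\|r\|_1$ requires careful principal-value handling of the divergent integral $\int_0^\delta \frac{e^{2\pi i \ell t}}{t}\,dt$ and a quantitative estimate of the ``bump shape'' $\check{G}$ as a true $L^1(\mathbb{R})$ function. As an alternative, one could proceed purely abstractly: by the Hahn-Banach theorem and Riesz representation on $C(\mathbb{T})$, existence of a measure $\nu$ with norm $\leq C$ is equivalent to the estimate $\bigl|\sum_{\ell=0}^n \hat{f}(-\ell)\bigr| \leq C\|f\|_\infty$ for all trigonometric polynomials $f$ of degree $\leq n$ (a one-sided Lebesgue-type bound), and one can invoke the F.\ and M.\ Riesz theorem to upgrade the representing measure to an absolutely continuous one by choosing the extension so that $\hat{\nu}(\ell) = 0$ for every $\ell < 0$, not merely for $-n \leq \ell \leq -1$.
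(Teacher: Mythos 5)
There is a genuine gap in Step 2, and as written the construction does not prove the lemma. The problem is the claim that $\|r\|_1=O(1)$ for the \emph{minimal-degree} interpolant $r(t)=\sum_{|\ell|\le n}\epsilon_\ell e^{-2\pi i\ell t}$. First, the errors $\epsilon_\ell$ do not tend to $0$ away from the cutoff: the symmetric truncation $1_{[\delta,1-\delta]}$ computes a principal value, which sees only $\tfrac12-\tfrac{i}{2}\cot(\pi t)$ and misses the half-unit point mass at $t=0$ carried by the one-sided series (its Abel means are $\tfrac12+\tfrac12 P_r-\tfrac{i}{2}Q_r$ with $P_r$ the Poisson kernel). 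A direct check gives $\lim_{\delta\to 0}\int_\delta^{1-\delta}\tfrac{e^{2\pi i\ell t}}{1-e^{-2\pi it}}\,dt=\tfrac12\mathrm{sgn}(\ell)+\tfrac12\delta_{\ell,0}$, never $1_{\ell\ge0}$; with $\delta=1/n$ one finds $\epsilon_\ell=\tfrac12+\tfrac1\pi\mathrm{Si}(2\pi\ell/n)+O(1/n)$. So the profile you must sample is $G(x)=\tfrac12+\tfrac1\pi\mathrm{Si}(2\pi x)$ \emph{restricted to} $[-1,1]$ (since $r$ carries no frequencies beyond $n$), and $G(1)\approx 0.95$, $G(-1)\approx 0.05$: the zero-extension of $G$ has $\Omega(1)$ jump discontinuities at $\pm1$, hence $\check G(\xi)\sim 1/|\xi|$ is \emph{not} in $L^1(\R)$ and the Poisson-summation bound is vacuous. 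In fact $\|r\|_1=\Theta(\ln n)$: the constant-$\tfrac12$ component of $\epsilon_\ell$ alone already contributes $\tfrac12 D_n$, of $L^1$ norm $\tfrac{2}{\pi^2}\ln n+O(1)$, and writing $\epsilon_\ell-\tfrac12=\tfrac1\pi\int_0^{\delta}\tfrac{\sin(2\pi\ell s)}{s}\,ds+O(1/n)$ and summing under the integral shows the remaining part behaves like $\tfrac{i\,\mathrm{Si}(2\pi)}{\pi}\cdot\tfrac{\cos((2n+1)\pi t)}{\sin(\pi t)}$ for $\delta\ll|t|\le\tfrac12$, again of $L^1$ norm $\Theta(\ln n)$. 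Shrinking $\delta$ to kill the edge values inflates $\|\tilde\nu\|_1$ to $\tfrac1\pi\ln(1/\delta)$, so no truncation scale works.

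The approach is salvageable, but not with the degree-$n$ interpolant: since the lemma constrains $\hat\nu(\ell)$ only for $|\ell|\le n$, you should take a de la Vall\'ee Poussin--type correction whose coefficients equal $\epsilon_\ell$ on $[-n,n]$ and taper continuously to $0$ on $n<|\ell|\le 2n$; the resulting profile on $[-2,2]$ is continuous and piecewise smooth, its transform is $O(1/\xi^2)$, and then your Poisson-summation argument does give $\|r\|_1=O(1)$ (yielding $\tfrac{\ln n}{\pi}+O(1)$, which suffices for the theorem even if the additive constant exceeds $2$). Alternatively one cancels the singularity multiplicatively, via $\nu=(1-e^{2\pi i(n+1)t}h(t))/(1-e^{-2\pi it})$ with $h$ analytic and $h(0)=1$, so the modification lives entirely outside $[-n,n]$. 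Note that the paper itself does not construct $\nu$ at all: it cites Bennett (Theorem 8.1 for the duality producing the measure, Corollary 8.5 for the norm bound, Corollary 8.3 via the F.\ and M.\ Riesz theorem for absolute continuity), which is exactly your sketched ``abstract alternative.'' But that alternative, as you state it, still presupposes the quantitative estimate $|\sum_{\ell=0}^n\hat f(-\ell)|\le(\tfrac{\ln n}{\pi}+2)\|f\|_\infty$ for degree-$n$ trigonometric polynomials, which is the entire analytic content of the lemma in dual form; without proving it (or repairing Step 2), neither of your routes is complete.
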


\begin{proof}[Proof of Lemma~\ref{lem:B}]
The references here are in Bennett's work~\cite{Bennett_1977}.
Theorem~8.1 states the existence of $\nu$ satisfying
$\hat{\nu}(\ell) = M_{j+\ell,j}$
and $\|\nu\|  = \|M\|_{(\infty,1)}$ for all Toeplitz matrices that are multipliers.
Let $M$ be the $n \times n$ Toeplitz matrix $M_{j,k} = 1_{j \geq k}$.
The proof of Corollary 8.5 shows that $M$ is a multiplier 
and also proves the bound on $\|\nu\|=\|M\|_{(\infty,1)}$ stated in~\eqref{eqn:normnu}.
In the proof of Corollary 8.3,
it is also proved that $\nu$ is an absolutely continuous $L^1$-function.
\end{proof}

\begin{proof}[Proof of lower bound]
Let $\mu = \mu_{j,k}$ be a distribution. 
Without loss of generality, we can assume $\Pr_\mu[j \geq k] \geq \tfrac{1}{2}$.
By Lemma~\ref{lem:B},
\begin{align*}
\frac{1}{2}
& \leq \Big| \sum_{j,k} G_{j,k} \mu_{j,k} \hat{\nu}(j-k)  \Big| \\
& = \Big| \sum_{j,k} G_{j,k} \mu_{j,k} \int \nu(t) e^{2 \pi i (j-k)t} dt  \Big| \\
& = \Big| \int \nu(t) \sum_{j,k} G_{j,k} \mu_{j,k} e^{2 \pi i j t}
e^{-2 \pi i k t}  dt \Big| \\
& \leq \|\nu\| \cdot \disc_\mu(G) . \qedhere
\end{align*}
\end{proof}

\bibliographystyle{amsplain}
\bibliography{discref.bib}

\end{document}